\newtheorem{theorem}{Theorem}
\newtheorem{proposition}{Proposition}
\newtheorem{lemma}{Lemma}
\begin{document}

\title
{
Estimating Rate-Distortion Functions Using the Energy-Based Model
\\
\thanks{The first two authors contributed equally to this work and $\dag$ marks the corresponding author. The authors would like to thank Prof. Hao Wu for valuable discussions and helpful advice. This work was supported by the National Natural Science Foundation of China (Grant Nos. 12271289 and 62231022).}
}

\author[1]{Shitong Wu}
\author[1]{Sicheng Xu}
\author[1]{Lingyi Chen}
\author[2]{Huihui Wu}
\author[3$\dag$]{Wenyi Zhang}

\affil[1]{Department of Mathematical Sciences, Tsinghua University, Beijing 100084, China}
\affil[2]{Zhejiang Key Laboratory of Industrial Intelligence and Digital Twin, 
\authorcr Eastern Institute of Technology, Ningbo, Zhejiang 315200, P.R. China
} 
\affil[3]{Department of Electronic Engineering and Information Science, 
\authorcr University of Science and Technology of China, Hefei, Anhui 230027, China 
\authorcr Email: wenyizha@ustc.edu.cn 
\vspace{-10pt}
}

\maketitle

\begin{abstract}
The rate-distortion (RD) theory is one of the key concepts in information theory, providing theoretical limits for compression performance and guiding the source coding design, with both theoretical and practical significance. 
The Blahut-Arimoto (BA) algorithm, as a classical algorithm to compute RD functions, encounters computational challenges when applied to high-dimensional scenarios.
In recent years, many neural methods have attempted to compute high-dimensional RD problems from the perspective of implicit generative models. 
Nevertheless, these approaches often neglect the reconstruction of the optimal conditional distribution or rely on unreasonable prior assumptions. 
In face of these issues, we propose an innovative energy-based modeling framework that leverages the connection between the RD dual form and the free energy in statistical physics, achieving effective reconstruction of the optimal conditional distribution. 
The proposed algorithm requires training only a single neural network and circumvents the challenge of computing the normalization factor in energy-based models using the Markov chain Monte Carlo (MCMC) sampling. 
Experimental results demonstrate the significant effectiveness of the proposed algorithm in estimating high-dimensional RD functions and reconstructing the optimal conditional distribution. 
\end{abstract}
\begin{IEEEkeywords}
Rate-distortion function, energy-based models, neural networks
\end{IEEEkeywords}
\vspace{-2pt}

\section{Introduction}
The rate-distortion (RD) theory, first introduced by Shannon in \cite{shannon1948mathematical, shannon1959coding}, is one of the fundamental concepts in information theory.
It  characterizes the minimum amount of information (rate) required to represent a source while maintaining a specified level of distortion.
By establishing theoretical limits on compression performance, the RD theory not only guides the development of practical source coding schemes \cite{balle2017end,balle2020nonlinear} but also enables the evaluation of their capabilities.
Therefore, solving the RD problem is both a theoretically significant and practically essential endeavor.

It is known that the classical Blahut-Arimoto (BA) algorithm \cite{blahut1972computation} can efficiently solve the RD problem with discrete source distributions.
%
However, when applied to continuous cases, the algorithm necessitates discretization, which results in an exponential increase in computational complexity\cite{10619450,lei2022neural}. 
Therefore, the  BA algorithm is not always applicable for continuous and high-dimensional scenarios.

In recent years, several novel approaches have been proposed to solve the high-dimensional RD problem.
For instance, the authors of \cite{yang2022towards} provide sandwich bounds for the RD function, where the upper bound is obtained utilizing the $\beta$-VAE \cite{higgins2017beta} architecture and has been validated to be tight for high-dimensional scenarios.
Meanwhile, \cite{lei2022neural} represents the reproduction distribution by a mapping and optimizes the dual function via neural parameterization.
Another work \cite{yang2024estimating} 
introduces a neural-network-free algorithm, using Wasserstein gradient descent in optimal transport to optimize the distribution.

However, while the dual function framework utilized in \cite{lei2022neural,yang2024estimating} allows for a more compact formulation, a limitation of these works is that they consider only the optimization of the marginal distribution, neglecting the reconstruction of the conditional distribution. 
%
{This is noteworthy since the conditional distribution is not only inherent to the formulation of the RD problem but also provides valuable insights for compression. }
Although the method in \cite{yang2022towards} can obtain both the marginal distribution and the conditional distribution, it relies on an assumption about the prior of the reproduction data, which is not always reasonable. 
As a result, effective and accurate reconstruction of the conditional distribution still remains unaddressed.


%
Existing neural methods \cite{lei2022neural,yang2022towards} estimating the RD function primarily model probability distribution through the lens of implicit generative models. 
%
However, with the recent breakthroughs in diffusion models \cite{ho2020denoising} and score-based models \cite{DBLP:conf/iclr/0011SKKEP21}, the energy-based modeling framework has emerged as a compelling alternative to probabilistic modeling.
%
%
%
%
Notably, a recent work \cite{li2024rate} approaches the RD problem from an energy perspective, finding that the optimal conditional distribution and the optimal marginal distribution are linked via the Boltzmann distribution. 
The authors of \cite{li2024rate} innovatively employ a single network to represent both distributions. 
Nevertheless, this method does not fully leverage the properties of the RD function.
%
It approximates the mutual information in the original RD Lagrangian with an additional neural network \cite{DBLP:conf/icml/BelghaziBROBHC18}, increasing the estimation error and computational complexity.

In this paper, we propose a more concise algorithm for estimating RD functions using the energy-based model.
%
%
Different from \cite{li2024rate}, we recognize that the dual form of the RD function corresponds to the free energy in statistical physics \cite{landau1980statistical}, based on which our approach follows.
Specifically, by substituting the marginal distribution, as represented by the energy-based model, into the dual function, we obtain the optimization objective. 
%
%
To avoid the challenge posed by the intractable normalization factor in energy-based models, we compute the gradient of the objective and estimate it using Langevin Markov chain Monte Carlo (Langevin MCMC) methods \cite{parisi1981correlation,grenander1994representations}, which enable practical training of the energy function through gradient-based optimization. 
%
%
{Furthermore, we analyze the asymptotic property of the proposed estimation algorithm and elucidate its connection with classical energy-based models.}
%
Finally, numerical experiments show the effectiveness of the proposed algorithm. 

\vspace{-2pt}
\section{Preliminaries}
\subsection{Energy-Based Models}

The energy-based model (EBM) \cite{lecun2006tutorial} is a framework for modeling the probability density of variables with an energy function.
Specifically, let $x$ denote a random variable in the continuous space $\mathcal{X}$ and $E_\theta(x):\mathcal{X}\rightarrow \mathcal{R}$ represent an energy function parameterized by a neural network.
The probability density of $x$ is given by the Boltzmann distribution:
\vspace{-5pt}
\begin{equation*}
    p_\theta(x)=\frac{e^{-E_\theta(x)}}{Z_\theta},
    \vspace{-7pt}
\end{equation*}
where $Z_\theta=\int_\mathcal{X} e^{-E_\theta(x)}\mathrm{d}x$ is the normalization factor.

The key advantage of energy-based models lies in their ability to efficiently compute the gradient of the log-probability (also known as the score) with respect to $x$:
\vspace{-5pt}
\begin{equation*}
    \nabla_x \log p_\theta(x)=-\nabla_x E_\theta(x).
    \vspace{-5pt}
\end{equation*}
Then a sample of $x$ can be obtained through the Langevin MCMC \cite{parisi1981correlation,grenander1994representations}. 
The update rule is given by
\vspace{-3pt}
\begin{equation*}
    x_{k}=x_{k-1}-\frac{\epsilon^2}{2}\nabla_x E_\theta(x_{k-1})+\epsilon z_k,
    \vspace{-2pt}
\end{equation*}
where $\epsilon$ is the step size, and $z_k\sim \mathcal{N}(0,1)$ is a noise term. 
As $\epsilon\rightarrow0$ and the number of steps $K\rightarrow+\infty$, the distribution of $x_K$ is guaranteed to converge to $p_\theta(x)$.

The energy function $E_\theta(x)$ can be trained using maximum likelihood estimation (MLE)\cite{song2021train, kumar2019maximum}.
Specifically, let $p_{\text{data}}(x)$ be the real distribution of the data, and then the target is to minimize the expectation of negative log-likelihood function
\vspace{-2pt}
\begin{equation*}
    \mathbb{E}_{x\sim p_{\text{data}}(x)}\left[-\log p_\theta(x) \right].
    \vspace{-2pt}
\end{equation*}
The function itself is hard to estimate as $Z_\theta$ is intractable. 
Nevertheless, its gradient can be simplified by 
\begin{equation*}
\begin{aligned}
    &\nabla_{\theta} \mathbb{E}_{x\sim p_{\text{data}}(x)}\left[-\log p_{\theta}(x)\right]\\=&-\mathbb{E}_{x\sim p_{\text{data}}(x)}\left[-\nabla_{\theta} E_{\theta}(x)\right]+\nabla_{\theta} \log Z_{\theta} \\
    =& \mathbb{E}_{x\sim p_{\text{data}}(x)}\left[\nabla_{\theta} E_{\theta}(x)\right]-
    \mathbb{E}_{x\sim p_{\theta}(x)}\left[\nabla_{\theta} E_{\theta}(x) \right].
\end{aligned}
\end{equation*}

Thus, it can be estimated using Monte Carlo methods to perform gradient descent.

\subsection{Rate-Distortion Theory}
%
%
The RD function characterizes the trade-off between the	compression rate and reconstruction fidelity in lossy source coding.
Given a memoryless source $X\sim P_X$ supported on the alphbet $\mathcal{X}$ with reproduction $Y\sim \mathcal{Y}$, the RD function is defined as
\begin{equation}
    R(D):=\inf _{P_{Y \mid X}: \mathbb{E}_{P_{X, Y}}[\rho(X, Y)] \leq D} I(X ; Y),
    \label{RD}
\end{equation}
where $\rho:\mathcal{X}\times \mathcal{Y}\rightarrow [0,+\infty)$ is the distortion measure.

A classical numerical approach to computing the RD function is the Blahut-Arimoto (BA) algorithm \cite{blahut1972computation}.
This method addresses the RD problem in discrete scenarios by iteratively optimizing the following two variables:
\begin{equation*}
\left\{
\begin{aligned}
p(y|x) & =\frac{q(y) e^{-\beta \rho(x, y)}}{\int_{\mathcal{Y}}q(y) e^{-\beta \rho(x, y)}\mathrm{d}y} \\
q(y) & =\int_{\mathcal{X}}  p(x) p(y|x)\mathrm{d}x
\end{aligned}\right.,
\end{equation*}
where $\beta$ is a fixed Lagrangian multiplier.
The algorithm works well for low-dimensional cases but becomes ineffective or even infeasible as the dimension increases\cite{lei2022neural}.

By  performing variational calculation on the Lagrangian of the RD function\cite{rose1994mapping}, the problem can be transformed into minimizing
\vspace{-2pt}
\begin{equation}
    F(q)=- \int_\mathcal{X} \mathrm{d} x p(x) \log \int_\mathcal{Y} \mathrm{d}y q(y) e^{-\beta \rho(x, y)}.
    \label{vf}
\end{equation}
This formulation is flexible and has led to the development of neural network algorithms specifically designed for high-dimensional cases\cite{lei2022neural}.

\subsection{Connection between Rate-Distortion Problem and Energy-Based Models}
A key connection between energy-based models and the rate-distortion problem lies in the observation that the optimality condition of the RD problem, i.e.,
\begin{equation}
    p(y|x) =\frac{q(y) e^{-\beta \rho(x, y)}}{\int_{\mathcal{Y}}q(y) e^{-\beta \rho(x, y)}\mathrm{d}y},
    \label{pyx}
\end{equation}
also takes the form of a Boltzmann distribution.

The authors of \cite{li2024rate} demonstrated that it is feasible to use a single neural network to represent both distributions $q(y)$ and $p(y|x)$, as detailed in the following lemma.

\begin{lemma}
    In the RD problem, if the optimal marginal distribution $q(y)$ is represented by the energy function $E_\theta(y)$, the conditional distribution $p(y|x)$ can be 
    represented by $E^\prime_\theta(x,y)=E_\theta(y)+\beta \rho(x,y)$, i.e.,
    \begin{equation*}
        \begin{aligned}
            &q_\theta(y)=\frac{e^{-E_\theta(y)}}{\int_{\mathcal{Y}}e^{-E_\theta(y)}\mathrm{d}y},\\
            &p_\theta(y|x)=\frac{e^{-E^\prime_\theta(x,y)}}{\int_{\mathcal{Y}}e^{-E^\prime_\theta(x,y)}\mathrm{d}y}.\\
        \end{aligned}
    \end{equation*}
\end{lemma}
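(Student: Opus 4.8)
The plan is to verify the claimed representation directly by substituting $E_\theta(y)$ and $E'_\theta(x,y) = E_\theta(y) + \beta\rho(x,y)$ into the Boltzmann form and checking that the two resulting distributions satisfy the optimality condition \eqref{pyx}. First I would write out $q_\theta(y) = e^{-E_\theta(y)}/\int_{\mathcal{Y}} e^{-E_\theta(y)}\mathrm{d}y$ by definition of the energy-based model, and likewise $p_\theta(y|x) = e^{-E'_\theta(x,y)}/\int_{\mathcal{Y}} e^{-E'_\theta(x,y)}\mathrm{d}y$. The key algebraic step is to expand the numerator of $p_\theta(y|x)$ as $e^{-E_\theta(y)-\beta\rho(x,y)} = e^{-E_\theta(y)}\, e^{-\beta\rho(x,y)}$, and similarly for the denominator, so that
\begin{equation*}
    p_\theta(y|x) = \frac{e^{-E_\theta(y)}\, e^{-\beta\rho(x,y)}}{\int_{\mathcal{Y}} e^{-E_\theta(y')}\, e^{-\beta\rho(x,y')}\,\mathrm{d}y'}.
\end{equation*}

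Next I would divide numerator and denominator by the (constant in $y$) normalization factor $Z_\theta^{(q)} := \int_{\mathcal{Y}} e^{-E_\theta(y)}\,\mathrm{d}y$, which turns $e^{-E_\theta(y)}/Z_\theta^{(q)}$ into exactly $q_\theta(y)$, yielding
\begin{equation*}
    p_\theta(y|x) = \frac{q_\theta(y)\, e^{-\beta\rho(x,y)}}{\int_{\mathcal{Y}} q_\theta(y')\, e^{-\beta\rho(x,y')}\,\mathrm{d}y'}.
\end{equation*}
This is precisely the optimality condition \eqref{pyx} with $q = q_\theta$, so if $q_\theta$ is taken to be the optimal marginal, then $p_\theta(y|x)$ is the corresponding optimal conditional. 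I would then remark that this construction is consistent in the sense that the BA fixed-point relation $q(y) = \int_{\mathcal{X}} p(x)\, p(y|x)\,\mathrm{d}x$ is inherited whenever $E_\theta$ realizes the true optimal marginal.

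There is no serious obstacle here — the statement is essentially a reparameterization identity, and the only thing to be careful about is the bookkeeping of which normalization constants depend on $y$ (and hence cancel) versus which depend on $x$ (and hence must be retained inside the conditional's partition function). The one point worth stating explicitly is well-posedness: the argument presumes $Z_\theta^{(q)} = \int_{\mathcal{Y}} e^{-E_\theta(y)}\,\mathrm{d}y < \infty$ and $\int_{\mathcal{Y}} e^{-E_\theta(y)-\beta\rho(x,y)}\,\mathrm{d}y < \infty$ for (almost) every $x$, so that both densities are genuinely normalizable; under the standing assumptions on $\rho \geq 0$ and a suitably growing $E_\theta$ this holds, and I would note it in a sentence rather than dwell on it.
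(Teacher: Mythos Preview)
Your argument is correct: the lemma is a direct reparameterization identity, and your substitution-and-cancellation of the normalizing constant $Z_\theta^{(q)}$ is exactly the right move to recover the optimality condition \eqref{pyx}. There is no gap.

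Regarding comparison with the paper: the paper does not actually supply a proof of this lemma. It is stated as Lemma~1 with attribution to \cite{li2024rate} and is used as a starting point for the subsequent development. Your write-up is the natural verification one would give, and the extra remarks you propose about integrability of $e^{-E_\theta}$ and $e^{-E_\theta-\beta\rho}$ are reasonable hygiene that the paper also leaves implicit.
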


However, the subsequent relaxation of mutual information in \cite{li2024rate} increases the complexity of the algorithm.
In this paper, we take Lemma 1 as a start point, and then investigate more intrinsic connections between the RD function and energy-based models, subsequently devising a more concise algorithm.
%

\section{Energy-Based Model of Rate-Distortion Function}
In this section, we introduce the energy-based rate-distortion (EBRD) model. 
We present the derivation of the algorithm and discuss its relationship with the classical EBMs.

\subsection{Energy-Based Rate-Distortion Model}

Noting that the dual function \eqref{vf} of RD corresponds to the free energy in statistical physics \cite{landau1980statistical}, we adopt this duality as our optimization objective.
By representing the distribution $q(y)$ in \eqref{vf} using an energy-based model $E_\theta(y)$, the loss function to be minimized can be obtained as:
%
\begin{equation*}
    \begin{aligned}
\mathcal{L}(\theta) & =-\mathbb{E}_{P_X}\left[\log \mathbb{E}_{Q_Y^\theta}\left[e^{-\beta \rho(X, Y)}\right]\right] \\
& =-\mathbb{E}_{P_X}\left[\log \int_Y e^{-E_\theta^{\prime}(X, y)} \mathrm{d} y-\log \int_Y e^{-E_\theta(y)} \mathrm{d} y\right] .
\end{aligned}
\end{equation*}
Similar to MLE, there is an intractable term 
$\int_{\mathcal{Y}}e^{-E_\theta(y)} \mathrm{d} y$
 in $\mathcal{L}(\theta)$, which makes direct computation infeasible.
 Therefore, our goal is to compute its gradient with respect to $\theta$, as described in Proposition \ref{g}.
 \begin{proposition}
 Let $P^\theta_{Y |X}$ be the conditional distribution represented by $E^\prime_\theta(x,y)=E_\theta(y)+\beta \rho(x,y)$, and $P^\theta_Y$ be the marginal distribution of the joint distribution $P_XP^\theta_{Y|X}$ over the space $\mathcal{Y}$.
 Then the gradient of  the objective function  $\mathcal{L}(\theta)$ with respect to the parameter $\theta$ is:
 \begin{equation}
     \nabla_\theta \mathcal{L}(\theta)\!=\!\mathbb{E}_{P^\theta_Y}\!\left[\nabla_\theta E_\theta(Y)\right]\!-\!\mathbb{E}_{Q^\theta_Y}\!\left[\nabla_\theta E_\theta(Y)\right].
     \label{gradient}
 \end{equation}
     \label{g}
 \end{proposition}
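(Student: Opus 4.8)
The plan is to differentiate the two logarithmic terms in $\mathcal{L}(\theta)$ separately, using the standard ``log-partition differentiation'' identity in each case, and then recognize the resulting expectations as being taken against $P^\theta_Y$ and $Q^\theta_Y$ respectively. Write $\mathcal{L}(\theta) = -\mathbb{E}_{P_X}\big[\log \int_\mathcal{Y} e^{-E'_\theta(X,y)}\,\mathrm{d}y\big] + \log \int_\mathcal{Y} e^{-E_\theta(y)}\,\mathrm{d}y$, noting the second term does not depend on $X$ and so comes out of the expectation. For the second term, $\nabla_\theta \log \int_\mathcal{Y} e^{-E_\theta(y)}\,\mathrm{d}y = \int_\mathcal{Y} \frac{e^{-E_\theta(y)}}{\int_\mathcal{Y} e^{-E_\theta(y')}\,\mathrm{d}y'}\big(-\nabla_\theta E_\theta(y)\big)\,\mathrm{d}y = -\mathbb{E}_{Q^\theta_Y}[\nabla_\theta E_\theta(Y)]$, using the definition $q_\theta(y) = e^{-E_\theta(y)}/\int e^{-E_\theta}$ from Lemma 1. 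With the leading minus sign in $\mathcal{L}$, this contributes exactly the $-\mathbb{E}_{Q^\theta_Y}[\nabla_\theta E_\theta(Y)]$ term of \eqref{gradient}.

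For the first term, fix $x$ and apply the same identity to $E'_\theta(x,\cdot)$: $\nabla_\theta \log \int_\mathcal{Y} e^{-E'_\theta(x,y)}\,\mathrm{d}y = -\mathbb{E}_{P^\theta_{Y|X=x}}[\nabla_\theta E'_\theta(x,Y)]$, where $p_\theta(y|x) = e^{-E'_\theta(x,y)}/\int e^{-E'_\theta(x,\cdot)}$ again by Lemma 1. Now since $E'_\theta(x,y) = E_\theta(y) + \beta\rho(x,y)$ and $\rho$ does not depend on $\theta$, we have $\nabla_\theta E'_\theta(x,y) = \nabla_\theta E_\theta(y)$, so this equals $-\mathbb{E}_{P^\theta_{Y|X=x}}[\nabla_\theta E_\theta(Y)]$. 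With the leading minus sign, the first term of $\mathcal{L}$ contributes $\mathbb{E}_{P_X}\big[\mathbb{E}_{P^\theta_{Y|X=X}}[\nabla_\theta E_\theta(Y)]\big]$. The final step is to observe that taking the outer expectation over $P_X$ of the inner conditional expectation is precisely integration against the marginal $P^\theta_Y$ of the joint law $P_X P^\theta_{Y|X}$, i.e. $\mathbb{E}_{P_X}\mathbb{E}_{P^\theta_{Y|X}}[\,\cdot\,] = \mathbb{E}_{P^\theta_Y}[\,\cdot\,]$ since the integrand $\nabla_\theta E_\theta(Y)$ depends only on $Y$. Adding the two contributions yields \eqref{gradient}.

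The only real subtlety — and the step I would be most careful about — is justifying the interchange of $\nabla_\theta$ with the integral $\int_\mathcal{Y}\,\mathrm{d}y$ (and with $\mathbb{E}_{P_X}$), which requires a dominated-convergence / differentiation-under-the-integral argument and hence mild regularity assumptions: $E_\theta$ smooth in $\theta$ with a $\theta$-locally-uniform integrable bound on $e^{-E_\theta(y)}\|\nabla_\theta E_\theta(y)\|$ over $\mathcal{Y}$, and likewise $\int e^{-\beta\rho(x,y)-E_\theta(y)}\,\mathrm{d}y$ finite and suitably dominated in $x$. In the paper's setting this holds because the objective \eqref{vf} is assumed well-defined (the partition integrals are finite) and $E_\theta$ is a neural network, hence smooth; I would state this as a standing regularity assumption rather than belabor it. Everything else is the elementary computation $\nabla_\theta \log Z_\theta = -\mathbb{E}[\nabla_\theta E_\theta]$ applied twice, exactly mirroring the MLE gradient identity recalled in the Preliminaries.
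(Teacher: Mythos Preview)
Your proposal is correct and follows essentially the same approach as the paper's proof: differentiate each log-partition term via the identity $\nabla_\theta \log \int e^{-E_\theta} = -\mathbb{E}[\nabla_\theta E_\theta]$, use $\nabla_\theta E'_\theta = \nabla_\theta E_\theta$ since $\rho$ is $\theta$-independent, and collapse the iterated expectation $\mathbb{E}_{P_X}\mathbb{E}_{P^\theta_{Y|X}}$ to $\mathbb{E}_{P^\theta_Y}$. Your added remark on the regularity needed to interchange $\nabla_\theta$ with the integrals is a welcome caveat that the paper omits.
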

\vspace{-15pt}
 \begin{proof}
     First, we expand the expression for the gradient of 
$\mathcal{L}(\theta)$ with respect to $\theta$:
\begin{equation*}
    \begin{aligned}
        \nabla_\theta \mathcal{L}(\theta)&\!=\!-\nabla_\theta \mathbb{E}_{P_X}\!\!\left[\log \!\int_\mathcal{Y} e^{-E_\theta^{\prime}(X, y)} \mathrm{d} y\!-\!\log\! \int_\mathcal{Y} e^{-E_\theta(y)} \mathrm{d} y\right]\\
        &\!=\!-\mathbb{E}_{P_X}\!\!\left[\nabla_\theta\!\log\!\! \int_\mathcal{Y}\!e^{-E_\theta^{\prime}(X, y)} \mathrm{d} y\!-\!\nabla_\theta\!\log\!\! \int_\mathcal{Y}\! e^{-E_\theta(y)} \mathrm{d} y\right]\!.
    \end{aligned}
\end{equation*}
For the first term, applying the chain rule of differentiation, we obtain
\vspace{-3pt}
\begin{equation*}
\begin{aligned}
    \nabla_\theta\log \int_\mathcal{Y} e^{-E_\theta^{\prime}(X, y)}\mathrm{d} y&=\frac{\int_\mathcal{Y} e^{-E_\theta^{\prime}(X, y)}(-\nabla_\theta E_\theta^{\prime}(X, y))\mathrm{d} y}{\int_{\mathcal{Y}} e^{-E_\theta^{\prime}(X, y)}\mathrm{d} y}\\
    &=\int_\mathcal{Y} p_\theta(y|x)(-\nabla_\theta E_\theta^{\prime}(X, y))\mathrm{d} y,\\
\end{aligned}
\end{equation*}
where $\nabla_\theta E_\theta^{\prime}(X, y)=\nabla_\theta E_\theta(y)$ due to $E^\prime_\theta(x,y)=E_\theta(y)+\beta \rho(x,y)$.
Thus,
\begin{equation*}
\begin{aligned}
    \nabla_\theta\log \int_\mathcal{Y} e^{-E_\theta^{\prime}(X, y)}\mathrm{d} y&=\int_\mathcal{Y} p_\theta(y|x)(-\nabla_\theta E_\theta(y))\mathrm{d} y\\
    &=-\mathbb{E}_{P^\theta_{Y |X}}\left[\nabla_\theta E_\theta(Y)\right].
\end{aligned}
\end{equation*}

The second term can also be derived in a similar manner:
\begin{equation*}
\begin{aligned}
    \nabla_\theta\log \int_\mathcal{Y} e^{-E_\theta( y)}\mathrm{d} y&=\frac{\int_\mathcal{Y} e^{-E_\theta(y)}(-\nabla_\theta E_\theta(y))\mathrm{d} y}{\int_{\mathcal{Y}} e^{-E_\theta(y)}\mathrm{d} y}\\
    &=\int_\mathcal{Y} q_\theta(y)(-\nabla_\theta E_\theta(y))\mathrm{d} y\\
    &=-\mathbb{E}_{Q^\theta_Y}\left[\nabla_\theta E_\theta(Y)\right].\\
\end{aligned}
\end{equation*}

By substituting the simplified two terms,  the gradient can be further simplified as
\begin{equation*}
\begin{aligned}
    \nabla_\theta \mathcal{L}(\theta)&\!=\!\mathbb{E}_{P_X}\!\left[\mathbb{E}_{P^\theta_{Y |X}}\!\left[\nabla_\theta E_\theta(Y)\right]\!-\!\mathbb{E}_{Q^\theta_Y}\!\left[\nabla_\theta E_\theta(Y)\right]\right]\\
    &\!=\!\mathbb{E}_{P^\theta_Y}\!\left[\nabla_\theta E_\theta(Y)\right]\!-\!\mathbb{E}_{Q^\theta_Y}\!\left[\nabla_\theta E_\theta(Y)\right].
\end{aligned} 
\end{equation*}
Thus, the final formula \eqref{gradient} is obtained.
 \end{proof}

 In practice, one can approximate the expectation term in \eqref{gradient} using Langevin MCMC.
 The specific sampling steps are described as follows:
 \begin{enumerate}
     \item Sample $x_1,x_2,\dots,x_N$ from $P_X$, i.e., derive a random batch from the source dataset.
     \item For each $x_i$, sample $y_i$ from the conditional distribution $P^\theta_{Y|X}$ via Langevin MCMC.
     \item Sample $y^\prime_1,y^\prime_2,\dots,y^\prime_N$ from $Q^\theta_Y$ via Langevin MCMC.
 \end{enumerate}

 Therefore, the formula \eqref{gradient} can be approximated as:
 \begin{equation*}
     \nabla_\theta \mathcal{L}(\theta)\!\approx \frac{1}{N} \sum_{i=1}^{N}\nabla_\theta E_\theta(y_i)-\frac{1}{N} \sum_{j=1}^{N}\nabla_\theta E_\theta(y^\prime_j).
 \end{equation*}

To this end, we have derived a computable approximate form for the gradient of the loss function $\mathcal{L}_{\theta}$.
Therefore, we can use gradient descent to train the energy function $E_\theta(y)$.
The detailed training procedure is summarized in Algorithm \ref{alg}.
{Additionally, we describe in Appendix \ref{app:A} how to estimate the $(R, D)$ pairs using the well-trained energy function $E_\theta$ and establish theoretical guarantees for the proposed estimation method.}

\begin{algorithm}[t]
\renewcommand{\algorithmicrequire}{\textbf{Input:}}
\renewcommand{\algorithmicensure}{\textbf{Output:}}
\renewcommand{\algorithmicreturn}{\textbf{Return:}}
\caption{Training Process of EBRD Model}
\begin{algorithmic}
    \REQUIRE Source distribution $P_X$; distortion measure $\rho(x,y)$; number of steps $K$, and step size $\epsilon$ for Langevin MCMC; sample batch size $N$; learning rate $\eta$ and hyper-parameter $\beta$.
    \ENSURE Parameter $\theta$ of the energy function $E_\theta(y)$.
    \WHILE{$\theta$ does not converge}
        \STATE Sample $x_1,x_2,\ldots,x_N\sim P_X$
        \STATE Sample  $y^{(0)}_1,y^{(0)}_2,\ldots,y^{(0)}_N\sim \mathcal{N}(0,I)$
        \FOR{$k=1,2,\ldots,K$}
        \STATE  Obtain $y^{(k)}_i,i=1,2,\ldots,N$ through Langevin step:
        \begin{equation*}
        y^{(k)}_i=y^{(k-1)}_i-\frac{\epsilon^2}{2}\nabla_y E^\prime_\theta(x_i,y^{(k-1)}_i)+\epsilon z^{(k)}_i,
        \end{equation*}
        where $z^{(k)}_i\sim\mathcal{N}(0,I)$.
        \ENDFOR
        \STATE Sample $y^{\prime(0)}_1,y^{\prime(0)}_2,\ldots,y^{\prime(0)}_N\sim \mathcal{N}(0,I)$
        \FOR{$k=1,2,\ldots,K$}
        \STATE  Obtain $y^{\prime(k)}_j,j=1,2,\ldots,N$ through Langevin step:
        \begin{equation*}
        y^{\prime(k)}_j=y^{\prime(k-1)}_j-\frac{\epsilon^2}{2}\nabla_y E_\theta(y^{\prime(k-1)}_j)+\epsilon z^{(k)}_j,
        \end{equation*}
        where $z^{(k)}_i\sim\mathcal{N}(0,I)$.
        \ENDFOR
        \STATE Compute the approximate gradient:
         \begin{equation*}
        \nabla_\theta \mathcal{L}(\theta)\!\approx \frac{1}{N} \sum_{i=1}^{N}\nabla_\theta E_\theta(y^{(K)}_i)-\frac{1}{N} \sum_{j=1}^{N}\nabla_\theta E_\theta(y^{\prime(K)}_j).
        \end{equation*}
        \STATE Update $\theta$ by gradient descent $\theta\leftarrow \theta-\eta \nabla_\theta \mathcal{L}(\theta)$
    \ENDWHILE
    \RETURN $\theta$
\end{algorithmic}
\label{alg}
\end{algorithm}
\vspace{-5pt}

\subsection{Relation with Classical Energy-Based Models}
The proposed EBRD model and the classical EBMs are formally very similar. 
However, a notable difference lies in the approach used to compute the gradients of the network parameter $\theta$.
The EBM computes the difference between the energy gradients of the original data and the generated samples. 
In contrast, the EBRD model computes the difference between the energy gradients of samples generated from the conditional distribution and those generated directly from the marginal distribution.
Fig. \ref{fig:comp} illustrates the sampling processes of the two models.

As $\beta\rightarrow+\infty$, the conditional distribution $p_\theta(y|x)$ converges to $\delta_x(y)$.
In this case, it can be assumed that the samples generated from the conditional distribution are identical to the original samples, suggesting that the classical EBM and the EBRD model become asymptotically equivalent.
\vspace{-4pt}
\begin{figure}[H]
    \centering
    \includegraphics[width=1\linewidth]{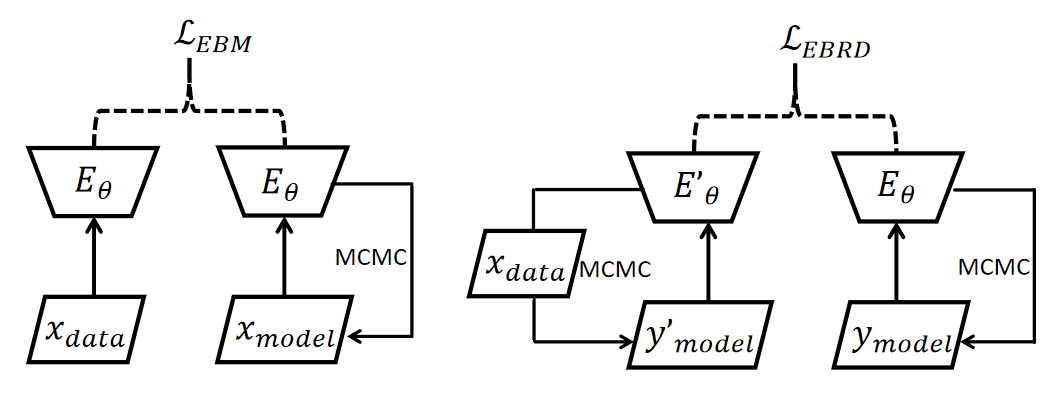}
    \caption{Comparison of sampling schemes between  the classical EBM and the EBRD model.}
    \label{fig:comp}
\end{figure}
\section{Experimental Results and Discussions}
\vspace{-2pt}
This section evaluates the effectiveness of the proposed EBRD model.
We first test the accuracy of estimating the RD function for scalar classical sources and vector Gaussian sources, and then visualize the reconstruction of conditional distributions on a two-dimensional Gaussian mixture model.
\subsection{Cases with Scalar Classical Sources}

This subsection estimates the RD function of two classical sources, i.e., the Gaussian source and the Laplacian source.

For the Gaussian source, we set $X\sim \mathcal{N}(0,1)$, and employ the squared Euclidean distance $\Vert x-y\Vert^2$ as the distortion measure. Its RD function is given by \cite{cover1999elements}:
\begin{equation*}
    R_{\text{gaussian}}(D)=
    \left\{
    \begin{aligned}
        &\frac{1}{2}\log(\frac{1}{D}),&0\leq D \leq 1\\
        &0,&D > 1
    \end{aligned}
    \right..
\end{equation*}

For the Laplacian source, the density function of 
$X$ is
\begin{equation*}
    p(x)=\frac{1}{2}\exp\left(-|x|\right).
\end{equation*}
We employ the $L_1$ distance $|x-y|$ as the distortion measure, and the correspoding RD function is given by \cite{berger2003rate}:
\begin{equation*}
    R_{\text{laplacian}}(D)=
    \left\{
    \begin{aligned}
        &\log(\frac{1}{D}),&0\leq D \leq 1\\
        &0,&D> 1
    \end{aligned}
    \right..
\end{equation*}

For the aforementioned two sources, we train the EBRD model with different values of $\beta$
 and compute the corresponding $(R, D)$ pairs.
For the Gaussian source, we set the number of steps $K=50$ and the step size $\epsilon=1.2\times10^{-2}$.
For the Laplacian source, we set  $K=80$ and  $\epsilon=3.5\times10^{-2}$.
Fig. \ref{fig:res1} compares the $(R,D)$ pairs given by the EBRD model with the corresponding theoretical  curves.
The experimental results demonstrate that our EBRD model  accurately computes the RD function, validating its effectiveness.
\vspace{-5pt}
\begin{figure}[H]
    \centering
    \includegraphics[width=0.49\linewidth]{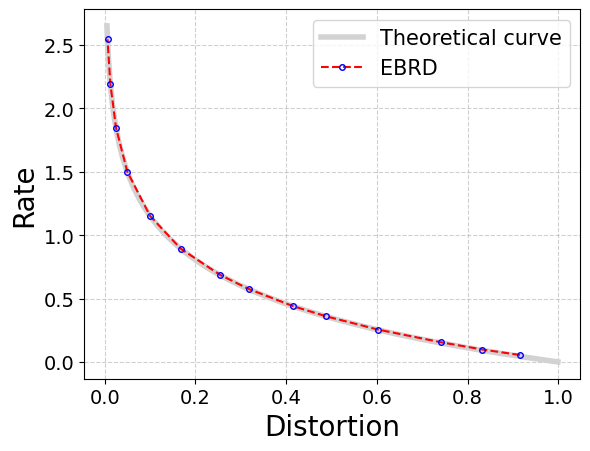}
    \includegraphics[width=0.48\linewidth]{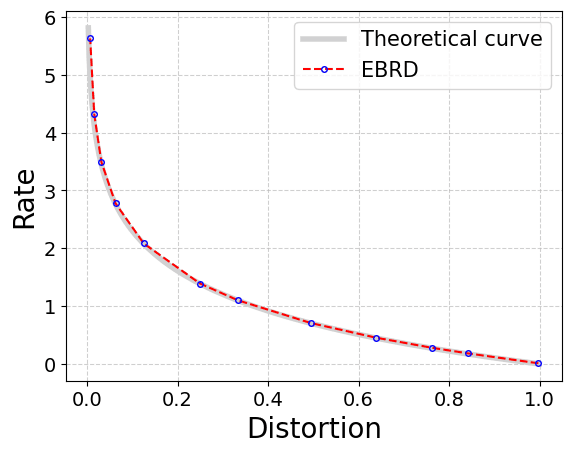}
    \caption{Comparison between the theoretical result and the EBRD algorithm on Gaussian (left) and Laplacian (right) sources.}
    \label{fig:res1} 
\end{figure}
\vspace{-5pt}
\subsection{Evaluation for Vector Gaussian Sources}
This subsection evaluates the performance of the proposed EBRD model on high-dimensional Gaussian sources.

Specifically, let $X$ follow the distribution $\mathcal{N}(0,\Sigma)$, where  $\Sigma$ admits the  decomposition $\Sigma=U\text{diag}(\sigma_1^2,\sigma_2^2,\ldots,\sigma_d^2)U^T$ with $U$ being an orthogonal matrix.
According to \cite{cover1999elements}, its RD function is
\begin{equation*}
    R(D)=
    \left\{
    \begin{aligned}
        &\sum_{i=1}^{d} \frac{1}{2}\log \frac{\sigma_i^2}{D_i},&0\leq D \leq \sum_{i=1}^{d} \sigma_i^2\\
        &0,&D>\sum_{i=1}^{d} \sigma_i^2
    \end{aligned}
    \right..
\end{equation*}
where
    $D_i\!=\!\min(\lambda,\sigma_i^2)$, 
and $\lambda$ is chosen to satisfy the equality $\sum_{i=1}^{d}D_i=D$.

In the experiment, we randomly $U$ and set $\sigma_i=2^{-\frac{i}{10}}$.
Fig. \ref{fig:res2} presents the experimental results obtained by the EBRD algorithm for dimensions $d=2,5,10$ compared with the NERD algorithm\cite{lei2022neural}.
It can be seen that both methods consistently obtain accurate results for $d=2$ and 5.
As for $d=10$, the results from both algorithms are accurate at low to medium rates.
However, for sufficiently high rates, errors emerge in both algorithms since the number of samples required for accurate computation grows exponentially with the rate, as explained in \cite{lei2022neural}.
\vspace{-6pt}
\begin{figure}[h]
    \centering
    \includegraphics[width=0.32\linewidth]{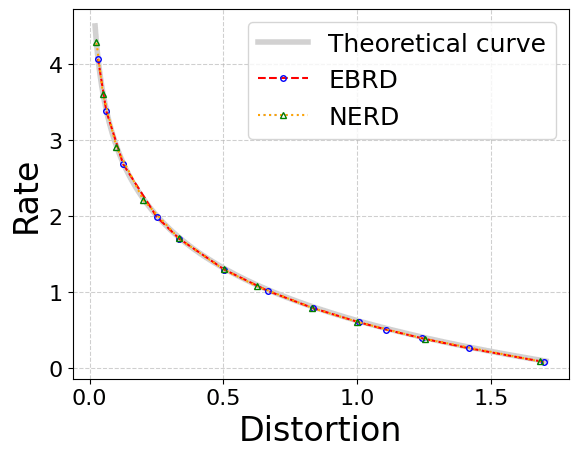}
    \includegraphics[width=0.32\linewidth]{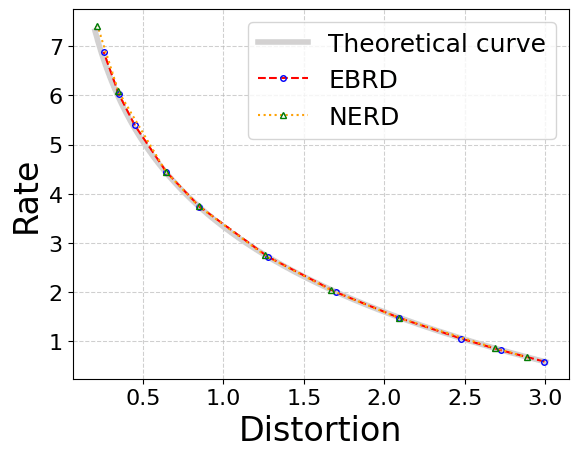}
    \includegraphics[width=0.32\linewidth]{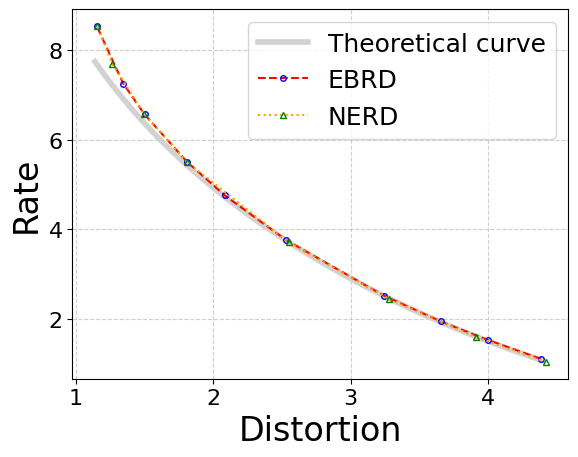}
    \caption{Comparison of EBRD and NERD algorithms with theoretical curves for vector Gaussian sources, with $d=2,5,10$ from left to right. }
    \label{fig:res2} 
\end{figure}
\vspace{-5pt}
\subsection{Visualization of Conditional Sampling}
This subsection adopts a two-dimensional Gaussian mixture model to visualize the optimal conditional distribution.

This distribution is constructed as a mixture of three two-dimensional Gaussian distributions with equal weights, each having distinct means and identical variances.
The density function of this distribution is
\vspace{-3pt}
\begin{equation*}
    p(x)=\sum_{i=1}^{3}\frac{1}{3}\mathcal{N}(\mu_i,\Sigma_i).
\end{equation*}
Specifically, the means of the three Gaussian components are placed equidistantly on a circle with a radius of 6. 
The variance of each component is set to $1$.
Samples are generated by first randomly selecting a component with equal probability and then drawing from the corresponding Gaussian distribution.

We sample 5,000 points from this distribution, visualizing the data distribution in Fig. \ref{fig:res3} (left) and presenting the RD curve obtained by the EBRD algorithm in Fig. \ref{fig:res3} (right).
Next, we select different $(R,D)$ pairs along the curve and employ the corresponding EBRD model to reconstruct the conditional distribution $p_\theta(y|x)$.
Fig. \ref{fig:res4} and Fig. \ref{fig:res5} illustrate the distribution of sample points $y$ conditioned on $x$ in various $(R,D)$ pairs along the curve.
It is noticed that with increasing rates,  the difference between the distribution of 
$y$ and that of the original source $x$ tends to diminish.
 Specifically, at $R = 5.96$ and $D = 0.02$, the two distributions become remarkably similar.
This observation is consistent with theoretical expectations.

\section{Conclusion}
In this work, we delve deeply into the connection between energy-based models and the RD problem, and propose an effective method to estimate the RD functions, i.e., the EBRD algorithm. 
The proposed algorithm is based on a reformulation of the RD problem by analogizing to the free energy in statistical physics, and then we derive a gradient descent algorithm similar to that in the classical EBMs.
%
The proposed EBRD algorithm not only accurately estimates the RD curve but also successfully reconstructs the optimal conditional distribution. 
Moreover, numerical experiments validate the effectiveness of the EBRD algorithm.

As for the future work, we aim to improve the sampling scheme involved in the EBRD model to enhance the stability of computation.
This refinement would allow the algorithm to handle larger-scale datasets, further expanding its applicability and potential impact.

\begin{figure}[H]
    \centering
    \includegraphics[width=0.49\linewidth]{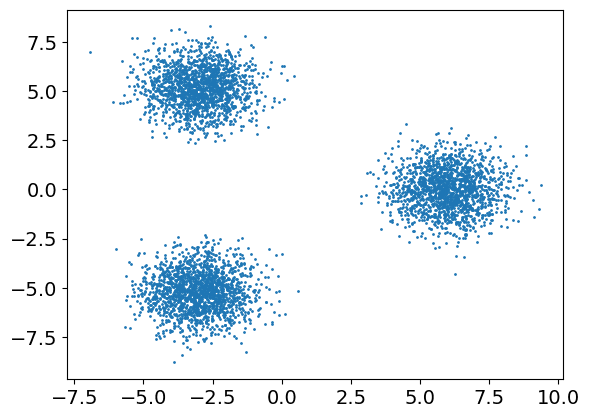}
    \includegraphics[width=0.47\linewidth]{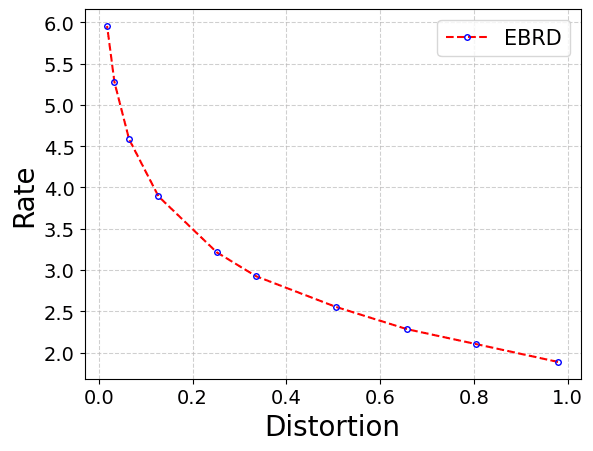}
    \caption{5000 sampled points from the Gaussian mixture model (left) and its RD curve (right).}
    \label{fig:res3} 
\end{figure}

\begin{figure}[H]
	\begin{minipage}{0.48\linewidth}
		\centerline{\includegraphics[width=\textwidth]{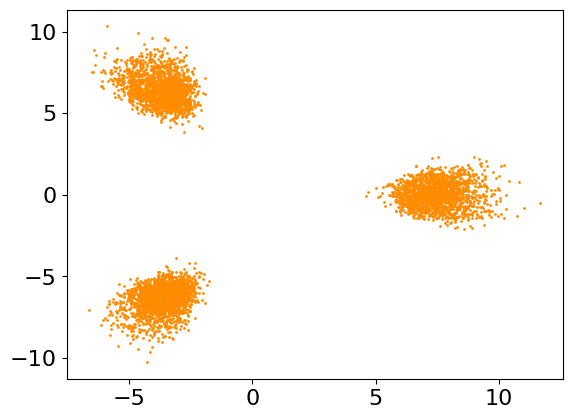}}
		\centerline{$R=1.89,D=0.98$}
	\end{minipage}
	\begin{minipage}{0.48\linewidth}
		\centerline{\includegraphics[width=\textwidth]{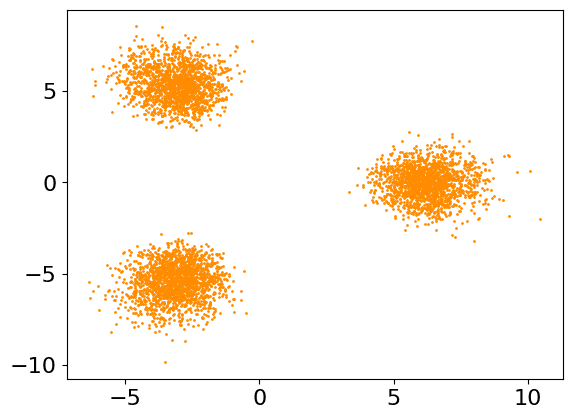}}
	 
		\centerline{$R=2.55,D=0.51$}
	\end{minipage}
 
	\caption{Visualization of conditional sampling with different $(R,D)$ pairs}
	\label{fig:res4}
\end{figure}

\begin{figure}[H]
	
	\centering
    \includegraphics[width=0.75\linewidth]{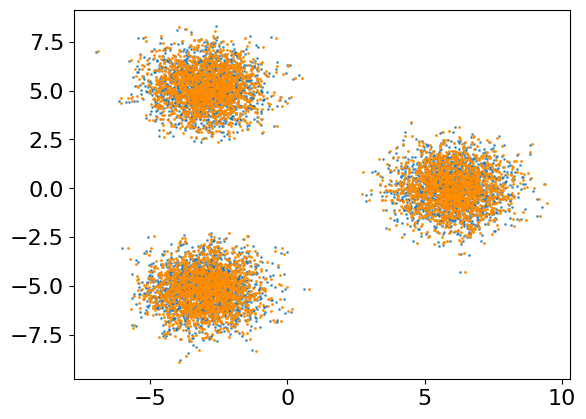}

	\caption{Comparison between the original distribution and the conditional sampling distribution at $R=5.96, D=0.02$. The blue scatter ones represent the original data, and the orange points represent the sampled data.}
	\label{fig:res5}
\end{figure}

\bibliographystyle{bibliography/IEEEtran}
\bibliography{bibliography/ERD_REF}

\begin{appendices}

\section{Estimation of rate and distortion}
\label{app:A}
Similar to \cite{lei2022neural}, we generate samples from the distribution $Y$ using the well-trained energy function $E_\theta$, and subsequently estimate the corresponding $(R,D)$ pairs.
Specifically, we sample $x_1,x_2,\dots,x_N$ from $P_X$, and $y_1,y_2,\dots,y_N$ from $Q_Y^{\theta}$.
Then the objective function $\mathcal{L}(\theta)$ is approximated as
\begin{equation}
    \widehat{\mathcal{L}}(\theta)=-\frac{1}{N}\sum_{i=1}^{N}\log\left(\frac{1}{N}\sum_{j=1}^Ne^{-\beta \rho(x_i,y_j)}\right).
    \label{lhat}
\end{equation}

Further, the distortion is estimated by 
\begin{equation*}
    \begin{aligned}
        D &= \int_{\mathcal{X\times Y}}p(x)p(y|x)\rho(x,y) \mathrm{d}x\mathrm{d}y\\
        &=\int_{\mathcal{X}}p(x)\frac{\int_{\mathcal{Y}}q(y)e^{-\beta \rho(x,y)}\rho(x,y)\mathrm{d}y}{\int_{\mathcal{Y}}q(y)e^{-\beta \rho(x,y)}\mathrm{d}y}\mathrm{d}x\\
        &\approx \frac{1}{N}\sum_{i=1}^{N}\frac{\sum_{j=1}^{N}e^{-\beta \rho(x_i,y_j)}\rho(x_i,y_j)}{\sum_{j=1}^{N}e^{-\beta \rho(x_i,y_j)}}.
    \end{aligned}
\end{equation*}

Then, the rate is obtained by
\begin{equation*}
    R = \widehat{\mathcal{L}}(\theta)-\beta D.
\end{equation*}

Next, we analyze the asymptotic relationship between the estimated objective function \eqref{lhat} and the original objective function \eqref{vf}, which is precisely described by the following theorem.

\begin{theorem}
Suppose that there exists $M$, such that for $\forall x\in\mathcal{X}$, it satisfies that $\int_{\mathcal{Y}}e^{-\beta\rho(x,y)}\mathrm{d}y<M$. Let $F^*$ be the optimal value of \eqref{vf}, and then there exists a compact parameter domain $\Theta$ such that the optimal value of $\widehat{\mathcal{L}}(\theta)$, defined as
\begin{equation*}
    \widehat{F}_{N}=\inf_{\theta\in \Theta}\widehat{\mathcal{L}}(\theta),
\end{equation*}
 converges to $F^*$ with probability one.
Specifically, for $\forall \epsilon>0$, there exist a positive  integer $N_0$, such that $\forall N>N_0$,
\begin{equation}
    |\widehat{F}_{N}-F^*|<\varepsilon,\quad a.e.
\label{ae}
\end{equation}
where “a.e.” stands for almost everywhere.
\end{theorem}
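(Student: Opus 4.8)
The plan is to establish the convergence in two stages: first show that the empirical objective $\widehat{\mathcal{L}}(\theta)$ converges uniformly (over a suitably chosen compact $\Theta$) and almost surely to the population objective that $\mathcal{L}(\theta)$ represents, and then transfer this uniform convergence to convergence of the infima. For the first stage, observe that $\widehat{\mathcal{L}}(\theta)$ is a double empirical average: the outer average over $x_1,\dots,x_N \sim P_X$ and the inner average over $y_1,\dots,y_N \sim Q_Y^\theta$, both estimating the nested integral in \eqref{vf}. I would fix a V-statistic / U-statistic viewpoint: for fixed $\theta$, $\frac{1}{N}\sum_j e^{-\beta\rho(x_i,y_j)} \to \int_{\mathcal{Y}} q_\theta(y) e^{-\beta\rho(x_i,y)}\,\mathrm{d}y$ almost surely by the strong law of large numbers, and the boundedness hypothesis $\int_{\mathcal{Y}} e^{-\beta\rho(x,y)}\,\mathrm{d}y < M$ together with the fact that $q_\theta$ is a probability density gives the uniform lower bound needed to control the outer logarithm (the inner sum stays bounded above by $M \cdot \sup q_\theta$ in an appropriate sense, and is bounded below away from $0$ on the relevant region since $\rho$ is finite).

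The key analytic step is making the $\log$ of the inner average well-behaved: $\log t$ is not Lipschitz near $0$, so I need a deterministic lower bound $\frac{1}{N}\sum_j e^{-\beta\rho(x_i,y_j)} \ge c > 0$ that holds uniformly. This is where I would exploit that $\rho$ takes finite values and the $y_j$ are drawn from $Q_Y^\theta$ — with at least one sample landing in a region where $\rho(x_i, \cdot)$ is bounded, the inner average is bounded below with overwhelming probability; more cleanly, restricting $\Theta$ to a compact set ensures $E_\theta$ and hence $q_\theta$ vary continuously and are uniformly bounded, and one can truncate to guarantee the bound. Given such a uniform bound, $\log$ becomes Lipschitz on the relevant range, and standard uniform law of large numbers arguments (e.g. via a Glivenko–Cantelli / bracketing argument over the compact, smoothly parameterized family $\{E_\theta : \theta \in \Theta\}$, using equicontinuity of $\theta \mapsto E_\theta$) yield $\sup_{\theta\in\Theta} |\widehat{\mathcal{L}}(\theta) - \mathcal{L}(\theta)| \to 0$ almost surely.

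For the second stage, I invoke the elementary fact that if $f_N \to f$ uniformly on $\Theta$ then $\inf_\Theta f_N \to \inf_\Theta f$; combined with the choice of $\Theta$ large enough that $\inf_{\theta\in\Theta}\mathcal{L}(\theta)$ equals (or approximates within $\varepsilon/2$) the true optimum $F^*$ of \eqref{vf} — this uses that the variational problem \eqref{vf} is attained, or nearly attained, by an energy-representable $q_\theta$ — we get $|\widehat{F}_N - F^*| < \varepsilon$ eventually, almost everywhere. Wrapping the almost-sure statements for each stage into a single full-measure event (a countable intersection suffices, taking $\varepsilon = 1/k$) gives \eqref{ae}.

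The main obstacle I anticipate is the control of the logarithm near zero, i.e.\ proving the uniform lower bound on the inner empirical average $\frac{1}{N}\sum_j e^{-\beta\rho(x_i,y_j)}$ — this is the one place where the boundedness hypothesis on $\int_{\mathcal{Y}} e^{-\beta\rho(x,y)}\,\mathrm{d}y$ and the compactness of $\Theta$ must be used together, and it is the step most likely to require a careful argument (possibly a Borel–Cantelli estimate on the event that all inner samples avoid a low-$\rho$ region) rather than a routine application of the law of large numbers. A secondary subtlety is justifying that the family $\{q_\theta\}$ over compact $\Theta$ is rich enough that $\inf_{\theta \in \Theta} \mathcal{L}(\theta)$ genuinely approaches $F^*$; this may need an explicit density/approximation remark about the expressiveness of the energy parameterization.
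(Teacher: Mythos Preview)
Your approach is essentially the same as the paper's: both split $|\widehat{F}_N - F^*| \le |\widehat{F}_N - F_\Theta| + |F_\Theta - F^*|$, bound the first term via $\sup_{\theta\in\Theta}|\widehat{\mathcal{L}}(\theta)-\mathcal{L}(\theta)|$ and the law of large numbers, and control the second by choosing $\Theta$ large enough (the paper makes this step explicit via the universal approximation theorem applied to $E_\theta$, which is exactly the ``density/approximation remark'' you flag). Your discussion of the logarithm near zero and the need for a genuinely uniform law of large numbers is in fact more careful than the paper's own proof, which simply invokes the pointwise law of large numbers and passes to the supremum without further comment.
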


\begin{proof}
    Fix $\varepsilon>0$. Let $q^*(y)$ be the optimal solution that minimizes \eqref{vf}, and it takes the form of the Boltzmann distribution:
    \begin{equation*}
        q^*(y)=\frac{e^{-E^*(y)}}{Z^*}.
    \end{equation*}
    where $Z^*=\int_\mathcal{Y}e^{-E^*(y)}\mathrm{d}y$ and $E^*(y)>0$.
    
    Fix $\xi>0$. By the universal approximation theorem   \cite{hornik1989multilayer,kidger2020universal}, there exist $\Theta$ and $\hat{\theta}\in\Theta$, such that $\forall y\in\mathcal{Y}$,
    $$
    |E^*(y)-E_{\hat{\theta}}(y)|<\xi.
    $$

    Then it holds that
    \begin{equation*}
    \begin{aligned}
        |e^{-E^*(y)}-e^{-E_{\hat{\theta}}(y)}|&=e^{-E^*(y)}|1-e^{E^*(y)-E_{\hat{\theta}}(y)}|\\
        &<e^{-E^*(y)}(e^\xi-1).
    \end{aligned}
    \end{equation*}

    From this, one obtains
    \begin{equation*}
    \begin{aligned}
        &|q^*(y)-q^{\hat{\theta}}(y)|=|\frac{e^{-E^*(y)}}{Z^*}-\frac{e^{-E_{\hat{\theta}}(y)}}{Z_{\hat{\theta}}}|\\
        \leq&\frac{e^{-E^*(y)}|Z_{\hat{\theta}}-Z^*|+Z^*|e^{-E^*(y)}-e^{-E_{\hat{\theta}}(y)}|}{Z^*Z_{\hat{\theta}}}\\
        <&\frac{2(e^\xi-1)}{(2-e^\xi)Z^*}.
    \end{aligned}
    \end{equation*}
    
    This further implies that
    \begin{equation*}
    \begin{aligned}
        &\left|\int_\mathcal{Y}q^*(y)e^{-\beta \rho(x,y)}\mathrm{d}y-\int_\mathcal{Y}q_{\hat{\theta}}(y)e^{-\beta \rho(x,y)}\mathrm{d}y\right|\\<&M\frac{2(e^\xi-1)}{(2-e^\xi)Z^*}.
    \end{aligned}
    \end{equation*}
    
    Since the right-hand side of the above inequality tends to zero as $\xi \rightarrow 0$  and $\log x$ is continuous, there exists a positive number $\xi$, such that the following relation holds:
    \begin{equation*}
        \left|\log\int_\mathcal{Y}q^*(y)e^{-\beta \rho(x,y)}\mathrm{d}y-\log\int_\mathcal{Y}q_{\hat{\theta}}(y)e^{-\beta \rho(x,y)}\mathrm{d}y\right|<\frac{\varepsilon}{2}.
    \end{equation*}

    This means that by choosing a suitable $\hat{\theta}$, one has $|F^*-\mathcal{L}(\hat{\theta})|<\frac{\varepsilon}{2}$.
    Next, define
    \begin{equation*}
    F_{\Theta}=\inf_{\theta\in\Theta}\mathcal{L}(\hat{\theta}).
    \end{equation*}
    From this definition, one obtains $F^*\leq F_{\Theta} \leq \mathcal{L}(\hat{\theta})$, and then
    \begin{equation}
        |F^*-F_{\Theta}|<\frac{\varepsilon}{2}.
        \label{ie1}
    \end{equation}
    %
    
    Further, one has 
    $$
    |\widehat{F}_{N}-F_{\Theta}|
    =|\inf_{\theta\in\Theta}\widehat{\mathcal{L}}(\theta)-\inf_{\theta\in\Theta}\mathcal{L}(\theta)|
    \leq\sup_{\theta\in\Theta}|\widehat{\mathcal{L}}(\theta)-\mathcal{L}(\theta)|.$$
    It follows from the law of large numbers \cite{gray2009probability} that for any given $\epsilon>0$, there exists $N_0$, such that 
    $\forall N>N_0$ and with probability one, $|\widehat{\mathcal{L}}(\theta)-\mathcal{L}(\theta)|\leq\frac{\varepsilon}{2}$.
    Therefore
    \begin{equation}
        |\widehat{F}_{N}-F_{\Theta}|<\frac{\varepsilon}{2}.
        \label{ie2}
    \end{equation}
    Combining \eqref{ie1} and \eqref{ie2}, one finally has
    \begin{equation*}
        |F^*-\widehat{F}_{N}|<\varepsilon,\,\forall\,N>N_0.
    \end{equation*}
    This concludes the proof of the theorem. 
\end{proof}

\end{appendices}

\end{document}